\newtheorem{proposition}{Proposition}
\newcommand{\h}[1]{\mathcal{#1}}
\newcommand{\R}{\mathbb{R}}
\newcommand{\N}{\mathbb{N}}
\newcommand{\hil}{\mathcal{H}}
\newcommand{\lh}{\mathcal{L(H)}}
\newcommand{\E}{\mathsf{E}}
\newcommand{\G}{\mathsf{G}}
\begin{document}
\title{Balancing efficiencies by squeezing in realistic eight-port 
homodyne detection}
\author{Juha-Pekka Pellonp\"a\"a}
\email{juha-pekka.pellonpaa@utu.fi}
\affiliation{Turku Centre for Quantum Physics, 
Department of Physics and Astronomy, University of Turku, FI-20014 Turku, Finland}
\author{Jussi Schultz}
\email{jussi.schultz@utu.fi}
\affiliation{Turku Centre for Quantum Physics, Department of 
Physics and Astronomy, University of Turku, FI-20014 Turku, Finland}
\author{Matteo G. A. Paris}
\email{matteo.paris@fisica.unimi.it}
\affiliation{Dipartimento di Fisica, Universit\`a degli Studi di Milano, 
I-20133 Milano, Italy}
\affiliation{CNISM, UdR Milano, I-20133 Milano, Italy}
\date{\today}
\pacs{}
\begin{abstract}
We address measurements of covariant phase observables (CPOs) by means of
realistic eight-port homodyne detectors. We do not assume equal quantum
efficiencies for the four photodetectors and investigate the conditions
under which the measurement of a CPO may be achieved.  We show that
balancing the efficiencies using an additional beam splitter allows us
to achieve a CPO at the price of reducing the overall effective
efficiency, and prove that it is never a smearing of the ideal CPO
achievable with unit quantum efficiency.  An alternative strategy based
on employing a squeezed vacuum as a parameter field is also suggested, which
allows one to increase the overall efficiency in comparison to the
passive case using only a moderate amount of squeezing. Both methods are
suitable for implementantion with current technology.
\end{abstract}
\pacs{42.50.Ar, 42.50.Ct, 03.65.-w}
\maketitle
\section{Introduction}
In quantum mechanics, the concept of phase for a radiation mode 
has always remained a somewhat controversial topic, with both
fundamental and technological implications, see \cite{r0,r1,r2,r3,r4,r5,Pel02} 
and references therein for a review. A major reason for this is that 
in trying to define the phase of a quantum oscillator one can clearly 
see the restrictions of the conventional
approach which identifies observables as self-adjoint operators, or
equivalently, their spectral measures. Indeed, it can be shown that no
spectral measure satisfies  the physically relevant conditions posed on
phase observables \cite{Hol83,Lahti,boh95,can96}. However, this problem has been
overcome with the introduction of the more general concept of
observables as positive operator measures. In this approach the
concept of a covariant phase observable (CPO) naturally emerges and 
these observables have been completely charaterized \cite{Hol83,Lahti}.  An important class
of CPOs arise as the angle margins of certain covariant phase space
observables, the most familiar example being the $Q$-function of the
field. Their physical significance is further emphasized by the fact
that any phase space observable can in principle be measured via
eight-port homodyne detection, a method which was introduced in the
microwave domain \cite{wal86} and then extensively analyzed in the
optical domain
\cite{wal87,Lay89,Man9X,Fre93,Pau93,Dar94,Lui96,Wod97,Hra00}. 
Other multiport homodyne \cite{Ray93,Par97} and heterodyne detection 
\cite{Yue78,Sha84,Sha85} may be employed as well, the latter also in 
the presence of frequency mismatch \cite{ZGM}.
\par
Any realistic measurement is subject to noise due to imperfections in
the measuring apparatus.  In the case of eight-port homodyne detection,
one of the relevant sources of noise is the presence of detector
inefficiencies. Indeed, as reported in \cite{Had09}, the quantum
efficiencies of commercially available detectors range from very high to
as low as a few percents and their effect is far from being 
negligible. In eight-port homodyning the presence of detector 
inefficiencies causes a Gaussian smearing on the measured observable 
\cite{Leo93,Opt95}. This appears as a convolution
structure which causes the actually measured distributions to be
smoothed versions of the ideal ones. As a matter of fact, 
quantum efficiencies of the photodetectors are traditionally assumed 
to be equal which results in a rotation invariant convolving measure. In other
words, the smoothing effect is the same in any direction in the phase
space. A detailed analysis shows that this symmetry is lost if we drop
the assumption  of equal efficiencies \cite{Lahti2}. This loss of
symmetry is crucial when the measurement is intended to gain information
about the phase properties of the field, and it is the purpose of this
paper to address this problem in detail.  
\par
We consider two methods for regaining this lost symmetry. At first, we show 
that the efficiencies can be balanced by inserting an additional beam 
splitter in front of one of the photodetectors. This results in a 
decreased overall efficiency for the measurement scheme. We also show 
that the angle margin of the measured phase observable is never a smearing 
of the ideal one. We then consider the effect of squeezing the parameter field 
while keeping the efficiencies fixed. As it turns out, this also compensates 
the efficiencies mismatch, thus retrieving the lost symmetry. We 
also compare the two methods and show that the overall efficiency is always 
greater for the squeezing strategy.
\par
The paper is organized as follows. In Section \ref{observables} we lay out 
the general framework and give the necessary definitions. Section \ref{detector} 
is devoted to the mathematical description of eight-port homodyne detection involving 
non-ideal photodetectors. In Sections \ref{beamsplitter} and \ref{squeezing} 
we describe in some details the aforementioned methods of overcoming the problems 
arising from different quantum efficiencies. The conclusions and future outlooks 
are presented in Section \ref{conclusions}.
\section{Covariant phase observables and phase space observables}\label{observables}
Let $\hil$ be the infinite dimensional separable Hilbert space
associated with a single mode electromagnetic field, and let $\lh$
denote the set of bounded  operators acting on $\hil$. We fix the photon
number basis $\{ \vert n\rangle \vert n=0,1,2,\ldots \}$ and denote by
$N$ the number operator associated with this basis. By diagonality of a
bounded operator we always mean diagonality with respect to the number
basis. We will use without explicit indication the coordinate
representation in which case $\hil$ is identified with $L^2 (\R)$ and
the basis vectors with Hermite functions. The states of the field are
represented by positive operators with unit trace, and the observables
are represented by normalized positive operator measures $\E :\h
B(\Omega)\rightarrow \h L(\hil)$, where $\h B(\Omega)$ stands for the
Borel $\sigma$-algebra of subsets of the measurement outcome space
$\Omega$. For a field in a state $\rho$, the measurement outcome
statistics of an observable $\mathsf{E}$ is given by the probability
measure $X\mapsto \textrm{tr} [\rho \mathsf{E} (X)]$. 
\par
An observable $\mathsf{\Phi}:\h B([0,2\pi))\rightarrow \lh$ is a {\em
covariant phase observable} (CPO) if 
\begin{equation*}
e^{i\phi N}\mathsf{\Phi} (X) e^{-i\phi N} = \mathsf{\Phi}(X \dot{+}\, \phi)
\end{equation*}
for all $X\in \h B([0,2\pi))$ and $\phi \in [0,2\pi)$, where $\dot{+} $ 
denotes addition modulo $2\pi$. According to the Phase Theorem 
\cite[Theor. 2.2]{Lahti}, each phase observable is of the form
\begin{equation}\label{phase}
 \mathsf{\Phi}(X) =\sum_{m,n=0}^\infty c_{mn}\, 
 \frac{1}{2\pi} \int_X e^{i(m-n)\alpha} \, d\alpha\, 
 \vert m\rangle\langle n\vert,
\end{equation}
for some unique {\em phase matrix} $(c_{mn})_{m,n=0}^\infty$, that is, 
a positive semidefinite complex matrix 
satisfying $c_{nn}=1$ for all $n\in\N$. The phase observables measured
by eight-port homodyne detection arise as angle margins of certain 
covariant phase-space observables. 
\par
An observable $\G:\h B(\R^2)\rightarrow \lh$ is a {\em covariant 
phase-space observable} if 
\begin{equation*}
W(q,p) \G (Z) W(q,p)^* =\G (Z+ (q,p))
\end{equation*}
for all $Z\in\h B(\R^2)$ and $(q,p)\in\R^2$, where 
$W(q,p)=e^{i\frac{qp}{2}}e^{-iqP}e^{ipQ}$ are the Weyl operators. 
Any covariant phase-space observable is generated by a unique positive 
unit trace operator $\sigma$ so that the observable is of the form 
\cite{Holevo, Werner}
\begin{equation*}
\G^\sigma (Z) =\frac{1}{2\pi}\int_Z W(q,p) \sigma W(q,p)^*\, dqdp
\end{equation*}
Now let us denote by $\mathsf{\Phi}^\sigma:\h B([0,2\pi))\rightarrow\lh$ 
the angle margin of $\G^\sigma$, that is, 
\begin{equation*}
\mathsf{\Phi}^\sigma(X) =\G^\sigma (X\times [0,\infty)),\qquad X\in\h B([0,2\pi)),
\end{equation*}
where the relation between the polar and Cartesian coordinates is given by 
$re^{i\alpha} =\frac{1}{\sqrt{2}} (q+ ip)$. The key result needed in our study 
is \cite[Theor. 4.1]{Lahti} which states that {\em $\mathsf{\Phi}^\sigma$ is 
a phase observable if and only if $\sigma$ is diagonal}. The simplest and from the experimental point of view the most useful example is the case $\sigma =\vert 0\rangle\langle 0\vert$, that is, when the observable is generated by the vacuum state. In this case the phase distribution is just the angle margin of the Husimi $Q$-function of the field. 
\par
It should be stressed that even though CPOs arise naturally as the margins of covariant phase space observables, not all CPOs are obtained in this way. In particular, the canonical phase observable is not the angle margin of any phase space observable \cite{Lahti}. 
In order to go into the analysis of phase observables related to eight-port 
homodyne detection, we need to recall the details of the measurement 
scheme. This is the subject of the next Section.
\section{Eight-port homodyne detector}\label{detector}
The eight-port homodyne detector consists of four input modes, four
balanced $50$:$50$ beam splitters, a phase shifter which provides 
a phase-shift of $\frac{\pi}{2}$ on one of the modes and four 
photodetectors with quantum efficiencies $\epsilon_j$, $j=1,2,3,4$ 
(see Fig. \ref{eightport}) which are not assumed to be equal. 
The measured quantities are the suitably scaled photon number 
differences between modes 1 \& 3, and 2 \& 4, respectively. The
signal field in mode 1 is the field under investigation while the
parameter field in mode 2 determines the measured observable. The input
mode 3 is left empty so it corresponds to a vacuum field and the local
oscillator in mode 4 is in a coherent state $\vert \sqrt{2} z\rangle$. 
The procedure for obtaining the phase distribution with this setup can be 
described as follows. Each
experimental event consists of a simultaneous detection of the
two commuting difference-photocurrents  which trace a pair of
field-quadratures. Each event thus corresponds to a point in the 
complex plane  and the phase value 
inferred from the event is the polar angle of the point itself. 
The experimental histogram of the phase distributions is
obtained upon dividing the plane into \lq\lq infinitesimal\rq\rq\ 
angular bins of equal width, from $0$ to $2\pi$,
then counting the number of points which fall into each 
bin. We shall next go into the mathematical description in more detail.
\begin{figure}[h!]
\includegraphics[width=0.95\columnwidth]{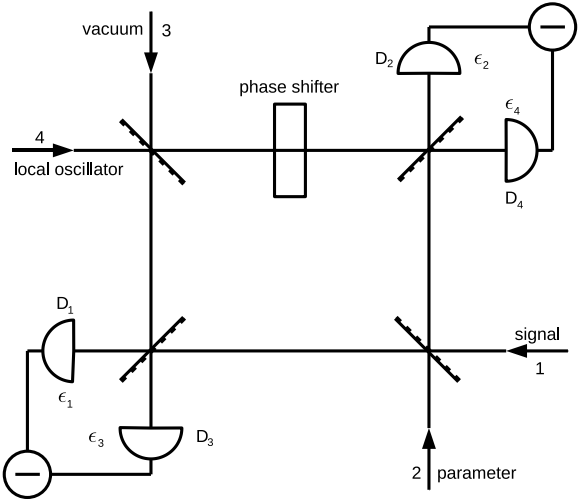}
\caption{Schematic diagram of the eight-port homodyne detection scheme.
The scheme consists of four input modes, four balanced $50$:$50$ beam 
splitters, a phase shifter which provides a phase-shift of $\frac{\pi}{2}$ 
on one of the modes and four photodetectors with quantum efficiencies 
$\epsilon_j$, $j=1,2,3,4$ which are not assumed to be equal. 
The measured quantities are the photon number differences between modes 
1 \& 3, and 2 \& 4, respectively, rescaled by the amplitude 
of the local oscillator, i.e. a strong coherent state $\vert \sqrt{2} z\rangle$
impinged into mode 4. The signal field in mode 1 is the field under 
investigation, while the parameter field in mode 2 determines 
the measured observable. The input mode 3 is left empty so it 
corresponds to a vacuum field.}\label{eightport}
\end{figure}
\par
In order to obtain measurements of covariant phase-space observables, we need to take the high-amplitude limit, that is, assume a very strong local oscillator. Indeed, if
$\sigma'$ is the state of the parameter field and we assume ideal
detectors ($\epsilon_j=1$ for all $j$), the measured observable in the
high-amplitude limit $\vert z\vert \rightarrow \infty$ is
$\G^{\sigma}$, where the generating operator is $\sigma = C\sigma'
C^{-1} $; here $C$ denotes the conjugation map $(C\psi)(x)
=\psi(x)^*\,$ \cite{Kiukas}.
The presence of detector inefficiencies causes a Gaussian smearing 
so that the actually measured observable is given by 
$\mu_{\epsilon_{13},\epsilon_{24}} *\G^\sigma:\h B(\R^2)\rightarrow \lh$ 
defined as 
\begin{equation*}
(\mu_{\epsilon_{13},\epsilon_{24}} *\G^\sigma) (Z) = \int \mu_{\epsilon_{13},
\epsilon_{24}} (Z-(q,p))\, d\G^\sigma (q,p),
\end{equation*}
where $\mu_{\epsilon_{13},\epsilon_{24}}:\h B(\R^2)\rightarrow [0,1]$ is 
a probability measure with the density
\begin{align}
(q,p)\mapsto &\: \frac{1}{2\pi}
\sqrt{\frac{\epsilon_{13}\epsilon_{24}}{(1-\epsilon_{13})(1-\epsilon_{24})}} 
\notag \\ & \times
\exp\left\{-\frac{\epsilon_{13}}{2(1-\epsilon_{13})} q^2-
\frac{\epsilon_{24}}{2(1-\epsilon_{24})} p^2\right\}\notag\,,
\end{align}
where $ \epsilon_{ij} =\tfrac{2\epsilon_i \epsilon_j}{\epsilon_i  
+\epsilon_j}$ \cite{Lahti2}. The quantities $\epsilon_{13}$ and 
$\epsilon_{24}$ may be viewed as overall efficiencies related to 
the two balanced homodyne detectors in the scheme. In particular,
$$\min \{ \epsilon_i, \epsilon_j \} \leq \epsilon_{ij} 
\leq \max \{ \epsilon_i, \epsilon_j \}\,.$$
The smeared phase-space observable is still covariant and thus generated by 
some positive trace one operator. Indeed, we have $$\mu_{\epsilon_{13},
\epsilon_{24}} *\G^\sigma=\G^{\mu_{\epsilon_{13},\epsilon_{24}}
*\sigma}\,,$$ where $\mu_{\epsilon_{13},\epsilon_{24}} *\sigma$ is the 
convoluted state \cite{Werner}
$$\mu_{\epsilon_{13},\epsilon_{24}} *\sigma =\int W(q,p) 
\sigma W(q,p)^*\, d\mu_{\epsilon_{13},\epsilon_{24}}(q,p)\,.$$ 
The angle margin of the measured phase space observable is then 
$\mathsf{\Phi}^{\mu_{\epsilon_{13},\epsilon_{24}} *\sigma}$ and 
the problem is to determine the conditions under which this is a 
CPO. In other words, we need to determine when the 
generating operator is diagonal. At first we give a partial characterization 
in the following Proposition. 
\begin{proposition}\label{diagonal}
If $\sigma$ is diagonal, then $\mu_{\epsilon_{13},\epsilon_{24}} 
*\sigma$ is diagonal if and only if  $\epsilon_{13} =\epsilon_{24}$. 
Conversely, if $\epsilon_{13} =\epsilon_{24}$, then $\mu_{\epsilon_{13},
\epsilon_{24}} *\sigma $ is diagonal if and only if $\sigma $ is diagonal.
\end{proposition}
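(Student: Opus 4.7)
I would pass to the quantum characteristic function $\chi_T(q,p):=\tr{T\,W(q,p)}$ for $T\in\tc$. The starting observation is that $T$ is diagonal in the number basis if and only if $\chi_T$ is invariant under rotations of $\R^2$: this follows from the intertwining $e^{i\phi N}W(q,p)e^{-i\phi N}=W(R_\phi(q,p))$ together with the fact that the characteristic function uniquely determines a trace-class operator.

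Next, a short Weyl-relation calculation, based on $W(q,p)^*W(u,v)W(q,p)=e^{i(qv-up)}W(u,v)$ and cyclicity of the trace, converts the operator convolution into pointwise multiplication at the level of characteristic functions,
\begin{equation*}
\chi_{\mu_{\epsilon_{13},\epsilon_{24}}*\sigma}(u,v)=\tilde\mu(u,v)\,\chi_\sigma(u,v),
\end{equation*}
where $\tilde\mu(u,v):=\int e^{i(qv-up)}\,d\mu_{\epsilon_{13},\epsilon_{24}}(q,p)$. For our specific Gaussian density, a direct Gaussian integration gives
\begin{equation*}
\tilde\mu(u,v)=\exp\!\Big\{-\tfrac{1-\epsilon_{24}}{2\epsilon_{24}}u^2-\tfrac{1-\epsilon_{13}}{2\epsilon_{13}}v^2\Big\},
\end{equation*}
which is nowhere zero on $\R^2$ and is rotation invariant if and only if $\epsilon_{13}=\epsilon_{24}$.

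With these two ingredients the proposition essentially unwinds. For the first assertion, diagonality of $\sigma$ makes $\chi_\sigma$ rotation invariant; if additionally $\epsilon_{13}=\epsilon_{24}$ then $\tilde\mu$ is rotation invariant and so is the product $\tilde\mu\,\chi_\sigma=\chi_{\mu*\sigma}$, proving diagonality of $\mu_{\epsilon_{13},\epsilon_{24}}*\sigma$. Conversely, if $\mu_{\epsilon_{13},\epsilon_{24}}*\sigma$ is diagonal then $\tilde\mu\,\chi_\sigma$ is rotation invariant; since $\chi_\sigma(0,0)=1$, continuity supplies a neighbourhood of the origin on which one can divide to get $\tilde\mu=\chi_{\mu*\sigma}/\chi_\sigma$ rotation invariant, and the explicit Gaussian shape of $\tilde\mu$ forces this local symmetry to be global, yielding $\epsilon_{13}=\epsilon_{24}$. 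For the second assertion, when $\epsilon_{13}=\epsilon_{24}$ the Gaussian $\tilde\mu$ is both nowhere vanishing and rotation invariant, so the identity $\chi_\sigma=\chi_{\mu*\sigma}/\tilde\mu$ makes rotation invariance of $\chi_\sigma$ equivalent to that of $\chi_{\mu*\sigma}$, which is exactly the claimed equivalence between diagonality of $\sigma$ and of $\mu_{\epsilon_{13},\epsilon_{24}}*\sigma$.

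\textbf{Expected obstacle.} The only delicate step is the "only if" direction of the first assertion, where one must extract rotation invariance of $\tilde\mu$ from rotation invariance of the product $\tilde\mu\,\chi_\sigma$ although $\chi_\sigma$ could in principle vanish on large sets. The resolution is local-to-global: $\chi_\sigma$ is nonzero on a neighbourhood of the origin by continuity and normalization, and the rigid real-analytic Gaussian form of $\tilde\mu$ propagates local rotation invariance to all of $\R^2$. Everything else is bookkeeping with Weyl commutation relations and a one-line Gaussian integral.
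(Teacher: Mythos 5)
Your proposal is correct and follows essentially the same route as the paper: characterize diagonality as rotation invariance of the characteristic function $\tr{T\,W(q,p)}$, turn the convolution into a pointwise product with the explicit, everywhere-nonzero Gaussian transform of $\mu_{\epsilon_{13},\epsilon_{24}}$, and read off the equivalences. If anything, you are slightly more careful than the paper at the one delicate spot (extracting rotation invariance of the Gaussian factor from that of the product when $\chi_\sigma$ might vanish away from the origin), which you resolve correctly via nonvanishing near $(0,0)$ and the rigidity of the Gaussian form.
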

\begin{proof}
First notice that any two trace class operators $\sigma$ and $\rho$ are 
equal if and only if $\textrm{tr}[\sigma W (q,p)]= \textrm{tr}[\rho W(q,p)]$ 
for all $(q,p)\in\R^2$ and the diagonality is equivalent to the condition 
$$
e^{i\phi N}\sigma e^{-i\phi N} =\sigma
$$
for all $\phi\in[0,2\pi)$. Furthermore, since
$$
e^{-i\phi N} W(q,p) e^{i\phi N} =W(q\cos\phi + p\sin \phi, -q\sin\phi +p\cos\phi)
$$
it follows that a state $\sigma $ is diagonal if and only if the mapping
$$
(q,p)\mapsto \textrm{tr} [\sigma W(q,p)]
$$
is invariant with respect to rotations. According to \cite[Prop. 3.4]{Werner} we have 
$$
\textrm{tr}[\mu_{\epsilon_{13},\epsilon_{24}}  *\sigma W(q,p) ] 
= \hat{\mu}_{\epsilon_{13},\epsilon_{24}}(p,-q) \textrm{tr}[\sigma W(q,p)]
$$ 
where 
\begin{align*}
\hat{\mu}_{\epsilon_{13},\epsilon_{24}} (p,-q) =&\int e^{i(px-qy)} 
\, d\mu_{\epsilon_{13},\epsilon_{24}} (x,y)\\
=& \exp\left\{-\frac{1-\epsilon_{24}}{2\epsilon_{24}} q^2 - 
\frac{1-\epsilon_{13}}{2\epsilon_{13}} p^2 \right\}
\end{align*}
is nonzero everywhere. If either of these functions is rotation invariant, 
their product is invariant if and only if the other function is also 
invariant. This proves the Proposition.
\end{proof}
Note that neither of the conditions in Proposition \ref{diagonal} is 
necessary for $\mu_{\epsilon_{13},\epsilon_{24}} *\sigma$ to be diagonal. 
Indeed, consider a state $\sigma =\mu_{\epsilon_{24},\epsilon_{13}}*
\sigma_{\textrm{diag}}$ where $\sigma_{\textrm{diag}}$ is an arbitrary 
diagonal state. For $\epsilon_{13}\neq \epsilon_{24}$ this state is not 
diagonal. On the other hand, since the measure $\mu_{\epsilon_{13}, \epsilon_{24}} 
*\mu_{\epsilon_{24},\epsilon_{13}}$ has the density 
\begin{align*}
(q,p)\mapsto &\: \frac{1}{2\pi} 
\frac{\epsilon_{13}\epsilon_{24}}{\epsilon_{13} -2\epsilon_{13}\epsilon_{24}
+\epsilon_{24}} 
\notag \\ &
\exp\left\{-\frac{1}{2}\frac{\epsilon_{13}\epsilon_{24}}{\epsilon_{13} 
-2\epsilon_{13}\epsilon_{24}+\epsilon_{24}} (q^2 +p^2)\right\}
\end{align*}
it follows from Proposition \ref{diagonal} and the associativity of 
convolutions \cite[Prop. 3.2]{Werner} that 
$$
\mu_{\epsilon_{13},\epsilon_{24}} *(\mu_{\epsilon_{24},\epsilon_{13}} 
*\sigma_{\textrm{diag}} ) =(\mu_{\epsilon_{13},\epsilon_{24}} *
\mu_{\epsilon_{24},\epsilon_{13}} )*\sigma_{\textrm{diag}}
$$ 
is diagonal.
\par
We close this section with a conceptual remark. Since the observable measured 
with this setup is the covariant phase space observable $\G^{\mu_{\epsilon_{13},\epsilon_{24}}*\sigma}$
it is a slight misuse of terminology to call this a {\em direct} measurement of the angle margin 
$\mathsf{\mathsf{\Phi}}^{\mu_{\epsilon_{13},\epsilon_{24}}*\sigma}$. However, the brief analysis below 
shows that this scheme can be used to directly measure $\mathsf{\mathsf{\Phi}}^{\mu_{\epsilon_{13},\epsilon_{24}}*\sigma}$.
Consider for convenience the case of ideal detectors. 
For a local oscillator with a finite intensity $\vert z\vert$  this scheme defines 
an observable $\mathsf{E}^\sigma_z :\h B(\R^2)\rightarrow \lh$. It was shown in \cite{Kiukas} that, 
with the choice $\textrm{arg}(z)=0$, 
$$
\lim_{\vert z\vert \rightarrow \infty} \mathsf{E}^\sigma_z =\G^\sigma
$$
weakly in the sense of probabilities (see \cite{Kiukas} for details). Now  $\mathsf{E}^\sigma_z$ 
is a discrete observable and the measurement outcomes consist of pairs $(q,p)\in\R^2$. 
Let $f:\R^2\setminus \{(0,0)\} \rightarrow [0,2\pi)$ be the pointer function which assigns to each pair the corresponding 
argument, that is, $f(q,p) =\alpha_{qp}$ defined by 
\begin{equation*}
 \cos \alpha_{qp} =\frac{q}{\sqrt{q^2+p^2}},\qquad \sin\alpha_{qp} =\frac{p}{\sqrt{q^2+p^2}}
\end{equation*}
and denote $\mathsf{E}^{f,\sigma}_z :\h B([0,2\pi))\rightarrow \lh$, 
\begin{equation*}
 \mathsf{E}^{f,\sigma}_z (X) =\mathsf{E}^{\sigma}_z \big(f^{-1}(X)\cup \{(0,0)\}\big)
\end{equation*}
Then it can be shown that 
\begin{equation*}
 \lim_{\vert z\vert\rightarrow \infty} \mathsf{E}^{f,\sigma}_z = \mathsf{\mathsf{\Phi}}^\sigma
\end{equation*}
weakly in the sense of probabilities and the same argumentation holds in the case of 
inefficient detectors. In this sense, by {\em choosing} to record only the values $\alpha_{qp}$ 
we see that eight-port homodyne detection in the high-amplitude limit can be used as a direct measurement of  
$\mathsf{\mathsf{\Phi}}^{\mu_{\epsilon_{13},\epsilon_{24}}*\sigma}$.
\section{Balancing efficiencies by an additional beam splitter}\label{beamsplitter}
Suppose that the state of the parameter field is diagonal, for instance,
a vacuum state. In order to obtain a CPO, we need 
to have $\epsilon_{13}=\epsilon_{24}$. As illustrated in Fig. \ref{efficiency},
a given value of $\epsilon_{ij}$ can be obtained with infinitely many
different values of $\epsilon_i$ and $\epsilon_j$. It follows that there
is a great deal of freedom in choosing the detectors in order to obtain
the equality $\epsilon_{13}=\epsilon_{24}$. This degree of freedom may
be exploited to modify the measurement setup in order to
compensate any difference in the overall efficiencies. Indeed, suppose
that the efficiencies $\epsilon_j$ are fixed and, for instance,
$\epsilon_{24}<\epsilon_{13}$. This means that the homodyne detector
consisting of detectors $D_1$ and $D_3$ is more efficient than the other
one. 
\begin{figure}[h!]
\includegraphics[width=0.95\columnwidth]{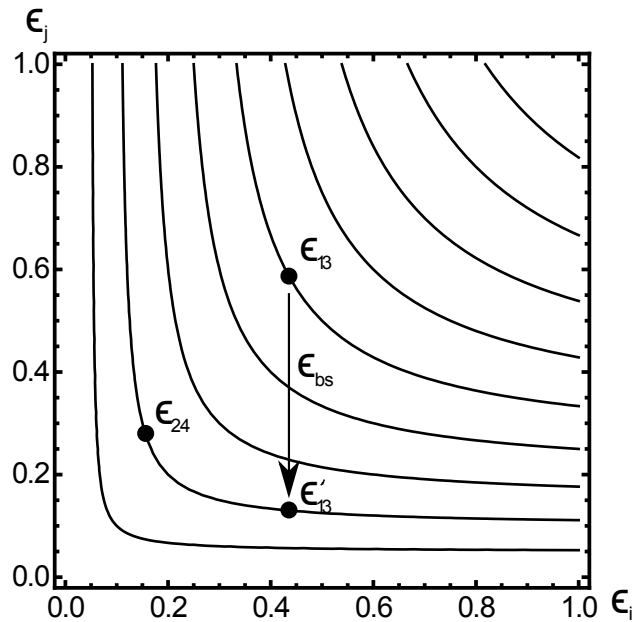}
\caption{Contourlines for the overall efficiency $\epsilon_{ij}$ as 
a function of $\epsilon_i$ and $\epsilon_j$. The addition of a beam splitter 
with transparency $\epsilon_{\textrm{bs}}$ can be used to balance the 
setup and and obtain $\epsilon'_{13}=\epsilon_{24}$.}\label{efficiency}
\end{figure}
\par
Since a photodetector with efficiency $\epsilon$ is equivalent to having  
a fictitious beam splitter with transparency $\epsilon$ in front of an 
ideal detector (see, e.g. \cite{Leonhardt,Busch}), one can artificially decrease the 
efficiency of, say, detector $D_3$ by placing an additional beam splitter 
with transparency $\epsilon_{\textrm{bs}}$ in front of the detector. The 
resulting effective efficiency of $D_3$ is then $\epsilon_3'
=\epsilon_{\textrm{bs}}\epsilon_3$ and the new overall efficiency is 
$$
\epsilon'_{13}= \frac{2\epsilon_{\textrm{bs}}\epsilon_1
\epsilon_3}{\epsilon_1 +\epsilon_{\textrm{bs}}\epsilon_3}
$$
Hence, with the appropriate choice
$$
\epsilon_{\textrm{bs}} =\frac{\epsilon_1 \epsilon_{24}}{2
\epsilon_1\epsilon_3 -\epsilon_3\epsilon_{24}}
$$
we may balance the setup and obtain $\epsilon'_{13}=\epsilon_{24}$.
This is illustrated in Fig. \ref{efficiency}. In other words, we
achieve a CPO at the price of artificially
decreasing the largest efficiency to the value of the smallest one.
For the remainder of this section we denote $\epsilon =\epsilon_{13}'
=\epsilon_{24}$ and use the notation $\mu_\epsilon =\mu_{\epsilon, 
\epsilon}$.
\par
It is interesting to note that by balancing the efficiencies of the
homodyne detectors we have a situation where both the actually measured
observable and the one corresponding to ideal detectors are phase
observables. Therefore it is natural to study the connection between
them. Since the measured phase-space observable is a smearing of the
ideal one, one might expect that this property is inherited into the
angle margins, namely, that there exists a probability measure $\nu:\h B
([0,2\pi))\rightarrow [0,1]$ such that $\mathsf{\Phi}^{\mu_\epsilon
*\sigma} =\nu *\mathsf{\Phi}^\sigma$.  However, this is not the case. 
\begin{proposition}
The measured observable $\mathsf{\Phi}^{\mu_\epsilon *\sigma}$ is never a smearing of
$\mathsf{\Phi}^\sigma$. 
\end{proposition}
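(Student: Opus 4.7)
The plan is to assume, for contradiction, that $\mathsf{\Phi}^{\mu_\epsilon*\sigma} = \nu * \mathsf{\Phi}^\sigma$ for some probability measure $\nu$ on $[0,2\pi)$, and to derive an impossible constraint on the phase matrices. Substituting formula~\eqref{phase} into the definition of $\nu*\mathsf{\Phi}^\sigma$ and identifying the coefficients of $e^{i(m-n)\alpha}$, the hypothesis becomes
\begin{equation*}
c^{\mu_\epsilon*\sigma}_{mn} = \tilde\nu(m-n)\,c^\sigma_{mn}, \qquad m,n\in\N,
\end{equation*}
where $\tilde\nu(k):=\int_0^{2\pi} e^{-ik\phi}\,d\nu(\phi)$. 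Restricted to the super-diagonal $m=n+1$, this forces the ratio $r_n := c^{\mu_\epsilon*\sigma}_{n+1,n}/c^\sigma_{n+1,n}$ to be independent of $n$, equal to $\tilde\nu(1)$; the task is therefore to exhibit two values of $n$ for which $r_n$ differs.

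Using the polar form $re^{i\alpha}=(q+ip)/\sqrt{2}$ for the integration in the definition of $\G^\sigma$, together with $\sigma=\sum_k p_k |k\rangle\langle k|$ and the standard formula for $\langle m|W(q,p)|k\rangle$ in terms of generalized Laguerre polynomials, one obtains a closed form for $c^\sigma_{n+1,n}$. The corresponding expression for $c^{\mu_\epsilon*\sigma}_{n+1,n}$ follows from the characteristic-function identity $\mathrm{tr}[(\mu_\epsilon*\sigma)W(q,p)] = e^{-\eta(q^2+p^2)/2}\,\mathrm{tr}[\sigma W(q,p)]$ with $\eta := (1-\epsilon)/\epsilon$, already used in the proof of Proposition~\ref{diagonal}. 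In the benchmark case $\sigma=|0\rangle\langle 0|$, $\mu_\epsilon*\sigma$ is a thermal state of mean photon number $\bar n = \eta$, and a short calculation yields
\begin{equation*}
r_0 = \frac{1}{\sqrt{\bar n+1}}, \qquad r_1 = \frac{3+4\bar n}{3(\bar n+1)^{3/2}},
\end{equation*}
which coincide only at $\bar n=0$, i.e.\ $\epsilon=1$.

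The main obstacle is to rule out accidental coincidence of $r_0$ and $r_1$ for an arbitrary diagonal $\sigma$, since the closed-form ratios then depend on all the populations $(p_k)$. A clean remedy is to pass to the infinitesimal regime $\eta\to 0^+$: the Laguerre three-term recurrence $xL_l(x) = (2l+1)L_l(x) - (l+1)L_{l+1}(x) - lL_{l-1}(x)$ translates multiplication of the characteristic function by $(q^2+p^2)/2$ into the tridiagonal Jacobi-type map $p_l \mapsto (2l+1)p_l - l\,p_{l-1} - (l+1)\,p_{l+1}$ on the populations, so that $c^{\mu_\epsilon*\sigma}_{n+1,n} = c^\sigma_{n+1,n} - \eta\,d^\sigma_{n+1,n} + O(\eta^2)$ with an explicit $d^\sigma_{n+1,n}$. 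The constancy of $r_n$ then forces $d^\sigma_{n+1,n}/c^\sigma_{n+1,n}$ to be independent of $n$; using the closed form $c^{(l)}_{1,0} = \Gamma(l+1/2)/(2\,l!)$ for the number-state contributions at $n=0$ and the analogous Laguerre integrals at $n=1$, a direct comparison of $n=0$ and $n=1$ (and, if necessary, higher $n$) shows that no diagonal $\sigma$ can satisfy this identity once $\eta>0$, giving the desired contradiction.
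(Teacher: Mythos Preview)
Your opening coincides with the paper's: assuming $\mathsf{\Phi}^{\mu_\epsilon*\sigma}=\nu*\mathsf{\Phi}^\sigma$ yields $c^{\mu_\epsilon*\sigma}_{m,m+k}=\hat\nu(k)\,c^\sigma_{m,m+k}$ for all $m,k$. The paper, however, does not compute any ratios. It invokes the known asymptotic $\lim_{m\to\infty}c^{|n\rangle}_{m,m+k}=1$ (valid for every number state $|n\rangle$), which by convexity and dominated convergence transfers to any diagonal generating operator. Since both $\sigma$ and $\mu_\epsilon*\sigma$ are diagonal, both phase matrices tend to $1$ along every diagonal, so $\hat\nu(k)=1$ for all $k$; hence $\nu=\delta_0$, $\mathsf{\Phi}^{\mu_\epsilon*\sigma}=\mathsf{\Phi}^\sigma$, thus $\mu_\epsilon*\sigma=\sigma$, which forces $\epsilon=1$. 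This handles every diagonal $\sigma$ and every $\epsilon<1$ at once, with no case analysis and no explicit Laguerre computations.

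Your route has a genuine gap in the general case. The expansion $c^{\mu_\epsilon*\sigma}_{n+1,n}=c^\sigma_{n+1,n}-\eta\,d^\sigma_{n+1,n}+O(\eta^2)$ describes how the phase matrix varies as a \emph{function of} $\eta$ near $0$, but the proposition is about one fixed $\eta=(1-\epsilon)/\epsilon>0$. From constancy of $r_n$ at that single value of $\eta$ you cannot deduce that the first-order coefficients $d^\sigma_{n+1,n}/c^\sigma_{n+1,n}$ are independent of $n$: two analytic functions of $\eta$ may agree at a given point while having different derivatives at $0$. So the step ``the constancy of $r_n$ then forces $d^\sigma_{n+1,n}/c^\sigma_{n+1,n}$ to be independent of $n$'' is not justified. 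And even granting it, the closing sentence (``a direct comparison \ldots\ shows that no diagonal $\sigma$ can satisfy this identity'') is a promise, not a proof: you would still have to rule out every choice of populations $(p_k)$, which is precisely the difficulty your perturbative reduction was meant to bypass. The explicit check for $\sigma=|0\rangle\langle0|$ is a reasonable sanity computation, but it does not carry the general diagonal case.
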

\begin{proof}
Assume that $\mathsf{\Phi}^{\mu_\epsilon *\sigma} =\nu *\mathsf{\Phi}^\sigma$
for some probability measure $\nu$. Let $(c_{mn}^{\mu_\epsilon *\sigma})$ and 
$(c_{mn}^\sigma )$ denote the phase matrices of $\mathsf{\Phi}^{\mu_\epsilon *\sigma}$ 
and $\mathsf{\Phi}^\sigma$, respectively. It is easily verified using Eq. \eqref{phase} 
that the matrix elements satisfy the relation
\begin{equation}\label{matrixelements}
c_{m,m+k}^{\mu_\epsilon *\sigma} =\hat{\nu}(k) c_{m,m+k}^\sigma
\end{equation}
where $\hat{\nu}(k)= \int_0^{2\pi} e^{ik\alpha}\, d\nu(\alpha)$. It was shown in \cite{Lahti3} that 
$\lim_{m\rightarrow \infty} c_{m,m+k}^{\vert n\rangle} =1$ for all $k\in\N$, where 
$(c_{m,m+k}^{\vert n\rangle})$ is the phase matrix related to the observable $\mathsf{\Phi}^{\vert n\rangle}$. 
Now both $\sigma$ and $ \mu_\epsilon *\sigma$ are mixtures of number states and
the convex structure is inherited into the corresponding observables, and thus into 
the phase matrices. Therefore, we have
\begin{equation*}
 \lim_{m\rightarrow \infty} c_{m,m+k}^{\mu_\epsilon *\sigma} =1=\lim_{m\rightarrow \infty}c_{m,m+k}^\sigma
\end{equation*}
for all $k\in \N$. This, together with Eq. \eqref{matrixelements} shows that $\hat{\nu} (k)=1$ for all $k\in\N$. 
It follows that $\mathsf{\Phi}^{\mu_\epsilon *\sigma} =\mathsf{\Phi}^\sigma$ which is possible if and only if 
$ \mu_\epsilon *\sigma =\sigma$. This is satisfied if and only if $\epsilon=1$, that is, the detectors are ideal.
\end{proof}
In the simplest case of the vacuum parameter field and balanced 
efficiencies the convoluted state can easily be calculated. First notice 
that  the necessary matrix elements of the Weyl operators are 
$$
\langle n\vert W(q,p) \vert 0\rangle = \frac{1}{\sqrt{n!}} 
\left( \frac{1}{\sqrt{2}} (q+ip)\right)^n e^{-\frac{1}{4}(q^2+p^2)}
$$
so that with the polar coordinates $re^{i\alpha} =\frac{1}{\sqrt{2}} (q+ip)$, 
one can calculate
\begin{align*}
\langle n\vert \mu_{\epsilon} * \vert 0\rangle\langle 0\vert \vert n\rangle 
&= \int \big\vert\langle n\vert W(q,p) \vert 0\rangle \big\vert^2 \, 
d\mu_{\epsilon} (q,p)\\
&= \frac{1}{n!} \frac{\epsilon}{1-\epsilon} \int r^{2n} 
\exp\left\{-\frac{r^2}{1-\epsilon}\right\} 
\frac{dr^2 d\alpha}{2\pi} \\
&= \epsilon (1-\epsilon)^n
\end{align*}
The convoluted state is thus
\begin{equation}\label{state1}
 \mu_{\epsilon} *\vert 0\rangle\langle 0\vert  
 =\epsilon \sum_{n=0}^\infty (1-\epsilon)^n 
 \vert n\rangle\langle n\vert
\end{equation}
\section{Balancing efficiencies by squeezing the parameter field}\label{squeezing}
There is an interesting alternative to the method of balancing
efficiencies considered above. As mentioned before, the requirement of
equal efficiencies is necessary only in the case that the parameter
field is in a diagonal state. Therefore it is possible that for fixed
efficiencies a suitably chosen non-diagonal state can be used to
compensate for the difference in the efficiencies so that the convoluted
state is diagonal. Here we show that this can always be done by
employing a suitable squeezed vacuum state as a parameter field.
\par
Let us assume that we are able to prepare the parameter field into a 
squeezed vacuum state $\vert \psi_a \rangle\langle \psi_a \vert $, 
where $a>0$ is the squeezing parameter and $\psi_a (x) 
=\left(a/\pi\right)^{1/4} e^{-\frac12 ax^2}$. As in the 
proof of Prop. \ref{diagonal}, we need to study the rotation 
invariance of the function
\begin{align}
(q,p)&\mapsto \textrm{tr} \left[\mu_{\epsilon_{13},\epsilon_{24} }* 
\vert \psi_a \rangle\langle \psi_a \vert W(q,p)\right] \nonumber\\
=\:&\hat{\mu}_{\epsilon_{13},\epsilon_{24}} (p,-q) 
\label{eq}
\langle \psi_a \vert W(q,p) \vert \psi_a\rangle \\
=\:& e^{-\left(\frac{1-\epsilon_{24}}{2\epsilon_{24}}+\frac{a}4\right) q^2-
\left(\frac{1-\epsilon_{13}}{2\epsilon_{13}}+\frac1{4a}\right)
p^2}
\notag
\end{align}
It is clear that this is invariant with respect to rotations if we can 
choose the squeezing parameter in such a way that the equality
\begin{equation}\label{eq3}
\frac{1-\epsilon_{24}}{2\epsilon_{24}} +\frac{a}{4} =
\frac{1-\epsilon_{13}}{2\epsilon_{13}} +\frac{1}{4a}
\end{equation}
holds. Solving Eq. \eqref{eq3} for $a$ we have
\begin{equation}\label{a}
a =\frac{\epsilon_{24}-\epsilon_{13}}{\epsilon_{13}\epsilon_{24}} 
\pm \sqrt{1+\left(\frac{\epsilon_{24}-\epsilon_{13}}{\epsilon_{13}\epsilon_{24}}\right)^2}
\end{equation}
where the solution with the plus sign is always positive. Hence, we
can compensate the difference in the efficiencies by using a suitably
squeezed vacuum as the parameter field. In order to compare this with
the method of balancing efficiencies we need to solve the spectral
decomposition of the convoluted state
$$
\mu_{\epsilon_{13},\epsilon_{24} }* 
\vert \psi_a \rangle\langle \psi_a \vert
$$ 
\par
First, define a parameter
$$
\eta =\frac{\epsilon_{13}-2\epsilon_{13}\epsilon_{24}+
\epsilon_{24}}{\epsilon_{13}\epsilon_{24}} +  
\sqrt{1+\left(\frac{\epsilon_{24}-\epsilon_{13}}
{\epsilon_{13}\epsilon_{24}}\right)^2}
$$
so that by inserting the value \eqref{a} of the squeezing parameter 
into Eq. (\ref{eq}) we obtain
\begin{equation}
\textrm{tr} [\mu_{\epsilon_{13},\epsilon_{24} }* 
\vert \psi_a \rangle\langle \psi_a \vert W(q,p)] =
e^{-\frac{\eta}{4} (q^2 +p^2)} \label{eq1}
\end{equation}
On the other hand we know that 
$$
\mu_{\epsilon_{13},\epsilon_{24}} * \vert \psi_a\rangle\langle 
\psi_a\vert =\sum_{n=0}^\infty \lambda_n \vert n\rangle\langle 
n\vert 
$$
so that 
\begin{align}
\textrm{tr}&\left[ \mu_{\epsilon_{13},\epsilon_{24}} * 
\vert \psi_a\rangle\langle \psi_a\vert W(q,p)\right]
= \sum_{n=0}^\infty \lambda_n \langle n\vert W(q,p) 
\vert n\rangle \nonumber \\ 
&= \sum_{n=0}^\infty \lambda_n e^{-\frac{1}{4} (q^2 +p^2)} 
L_n \left(\tfrac{1}{2}(q^2 +p^2)\right) \label{lambda1}
\end{align}
where
$L_n (x)$ denotes the $n$-th Laguerre polynomial. 
Upon rewriting the  exponential function 
in Eq. (\ref{eq1}) using the series representation 
\cite[8.975(1)]{Gradshteyn}
$$
 e^{\frac{z}{z-1}x} =(1-z)\sum_{n=0}^\infty L_n (x) z^n,\qquad \vert z\vert <1,
$$
one has
\begin{align}
e^{-\frac{\eta}{4}(q^2 +p^2) } &= 
e^{-\frac{1}{4}(q^2 +p^2) } e^{\frac{1}{4}(1-\eta)(q^2 +p^2)} \notag
\\ &= \frac{2e^{-\frac{1}{4}(q^2 +p^2) }}{\eta +1} 
\sum_{n=0}^\infty \left(\tfrac{\eta-1}{\eta +1} \right)^n 
L_n \left(\frac{q^2 +p^2}2\right)\label{lambda2}\,,
\end{align}
where $0<\frac{\eta-1}{\eta +1}<1$. Comparing 
Eqs. \eqref{lambda1} and \eqref{lambda2} we find that 
the eigenvalues $\lambda_n$ are
$$
\lambda_n =\frac{2}{\eta +1} \left(\frac{\eta-1}{\eta +1} \right)^n
$$
and the state is 
\begin{equation}\label{state2}
\mu_{\epsilon_{13},\epsilon_{24}} *\vert \psi_a
\rangle\langle \psi_a \vert =\epsilon_{\textrm{eff}} 
\sum_{n=0}^\infty (1-\epsilon_{\textrm{eff}})^n \vert n\rangle\langle n\vert
\end{equation}
where we have defined $$\epsilon_{\textrm{eff}} =\frac{2}{\eta +1}\,,$$ 
which may be viewed as the overall effective efficiency of this measurement scheme. 
\par
The remarkable feature of this method is the inequality 
\begin{equation}\label{comparison}
\epsilon_{\textrm{eff}} \geq \epsilon_{\textrm m} 
\equiv \min \{ \epsilon_{13},\epsilon_{24} \}
\end{equation}
which holds for any value of the quantum efficiencies. Furthermore, the equality 
holds if and only if $\epsilon_{13} =\epsilon_{24}$ and in this case no squeezing 
is needed. This means that for $\epsilon_{13} \neq \epsilon_{24}$ the overall 
efficiency of this method is always greater than the one obtained by balancing 
the efficiencies by the insertion of an additional beam splitter. Indeed, by 
multiplying both sides of \eqref{comparison} by $ (\eta +1)\max\{ \epsilon_{13},
\epsilon_{24}\}$ and after some algebra we see that \eqref{comparison} is equivalent to  
\begin{equation*}
\sqrt{\epsilon_{13}^2\epsilon_{24}^2 +(\epsilon_{24}-\epsilon_{13})^2} 
\leq  \vert \epsilon_{24}- \epsilon_{13} \vert +\epsilon_{13}\epsilon_{24}
\end{equation*}
which holds for all $\epsilon_{13}$ and  $ \epsilon_{24}$.
\par
In order to make our analysis more quantitative let us introduce the 
quantity
\begin{align}
\gamma = 
\frac{\epsilon_{\textrm{eff}}}{\epsilon_{\textrm m}} = 
\frac{2}{(1+\eta)\epsilon_{\textrm m}}\,, 
\end{align}
which represents the ratio between the effective efficiency achievable
by squeezing the parameter field at fixed value of the four efficiencies 
$\epsilon_j$, $j=1,..,4$ and the corresponding quantity obtained by 
the insertion of a beam splitter. From Eq. (\ref{comparison}) we know already 
that $\gamma\geq1$, whereas in Fig. \ref{f:gm} we report its behaviour as a
function of $\epsilon_{13}$ and $\epsilon_{24}$. 
\begin{figure}[h]
\includegraphics[width=0.95\columnwidth]{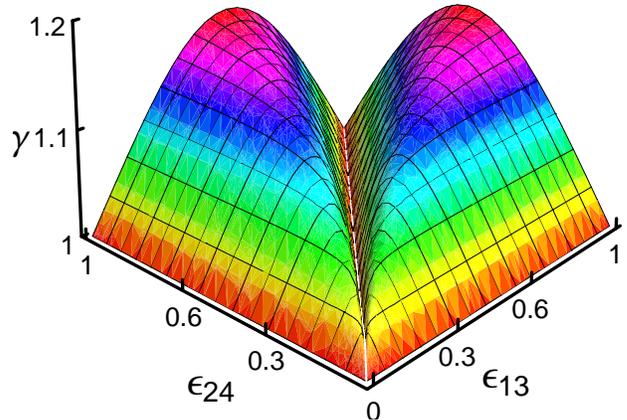}
\caption{\label{f:gm} (Color Online) The ratio $\gamma$ between 
the effective efficiency achievable
by squeezing the parameter field 
and the corresponding quantity obtained by 
the insertion of a beam splitter as a
function of $\epsilon_{13}$ and $\epsilon_{24}$.}
\end{figure}
\par
As it is apparent from the plot $\gamma$ is symmetric under the exchange
of $\epsilon_{13}$ and $\epsilon_{24}$ and achieves its maximum
$\gamma\simeq 1.17$ for $\epsilon_{13}=0.5$ and $\epsilon_{24}=1$ 
or viceversa. The function is not particularly peaked around its maximum
and this means that there is a wide range of values for 
$\epsilon_{13}$ and $\epsilon_{24}$ for which we have a significant gain 
in squeezing the parameter field in comparison to the insertion of a
beam splitter. On the other hand, when one of the two efficiencies is
very small then the two methods are equally ineffective.
The amount of squeezing needed to achieve CPO strongly depends on the 
values of the efficiencies. The region of maximum improvement
corresponds to a moderate squeezing, i.e. $a$ not too far from one.
In Fig. \ref{f:pargm} we report the parametric plot of $\gamma$ as a
function of the corresponding squeezing: this is a multivalued plot since
there are many pairs $(\epsilon_{13},\epsilon_{24})$ for which the same
$\gamma$ is achievable, though employing different amounts of squeezing.
\begin{figure}[h]
\includegraphics[width=0.95\columnwidth]{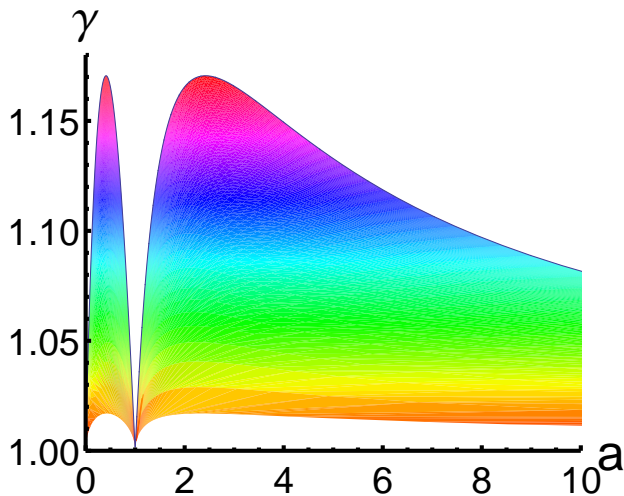}
\caption{\label{f:pargm} (Color Online) Parametric plot of 
the ratio $\gamma$ between the effective efficiency achievable
by squeezing the parameter field and the corresponding quantity 
obtained by the insertion of a beam splitter as a
function of the squeezing $a$ needed to achieve the compensation.
}\end{figure}
\par
The two symmetric maxima of Fig. \ref{f:gm} correspond to 
squeezing parameters which are inverses of each other, ($a\simeq 2.414$
and $a\simeq 0.414$) i.e. they correspond to the same amount of squeezing, 
but in orthogonal directions. In turn, when the values of the two efficiencies 
$\epsilon_{13}$ and $\epsilon_{24}$ are close to each other we have
\begin{align*}
a & \simeq 1 + \frac{\epsilon_{13}-\epsilon_{24}}{\epsilon_{\textrm
m}^2}   
+ \h O (\epsilon_{13}-\epsilon_{24})^2\,,
\\ 
\epsilon_{\textrm{eff}} & \simeq \epsilon_{\textrm m}+ \frac12 |\epsilon_{13}-\epsilon_{24}|
+ \h O (\epsilon_{13}-\epsilon_{24})^2\,,
\\ 
\gamma & \simeq 1 + \frac{|\epsilon_{13}-\epsilon_{24}|}{2 \epsilon_{\textrm m}} 
+ \h O (\epsilon_{13}-\epsilon_{24})^2\,.
\end{align*}
Overall, we conclude that squeezing the parameter field is always convenient,
and may lead to a considerable gain in the effective efficiency in comparison  
to the insertion of a beam splitter. Since the maximum gain corresponds to the
use of a moderate amount of squeezing we foresee possible experimental implementations
with current technology.
\par
We close this section by comparing the phase distributions obtained by
using these two methods of balancing the efficiencies. Suppose, for
simplicity, that the signal field is in a coherent state $\vert
z\rangle$ with $z =1$ and the overall efficiencies of the homodyne
detectors are $\epsilon_{13}=0.5$ and $\epsilon_{24}=1$. The effective
efficiency obtained by using the squeezing method is then
$\epsilon_{\textrm{eff}} \simeq 0.828$. We show the phase
distributions in Fig. \ref{distributions}  where we have also added the
ideal case for comparison. It is clear that the squeezing method
provides a distribution which is more peaked around its maximum. To make
this more precise, consider the {\em minimum variance} of the
distribution defined as \cite{Lahti3} 
\begin{equation*}
\textrm{Var}_\textrm{min}(p) =\inf_{\phi,\varphi\in\R} \, \frac{1}{2\pi} \int_{\varphi-\pi}^{\varphi+\pi} (\phi-\varphi)^2p(\phi) \, d\phi
\end{equation*}
Let $p_{\textrm{id}}$, $p_{\textrm{sq}}$ and $p_{\textrm{bs}}$ be the
phase distributions obtained with ideal detectors, squeezing, and by
using an additional beam splitter. Then the minimum variances are given
by \begin{align*}
&\textrm{Var}_\textrm{min}(p_\textrm{id})\simeq 0.76,\\ 
&\textrm{Var}_\textrm{min}(p_\textrm{sq}) \simeq 0.89,\\
&\textrm{Var}_\textrm{min}(p_\textrm{bs}) \simeq 1.24, 
\end{align*}
which clearly shows the advantage of the squeezing method.
\begin{figure}[h!]
\includegraphics[width=0.95\columnwidth]{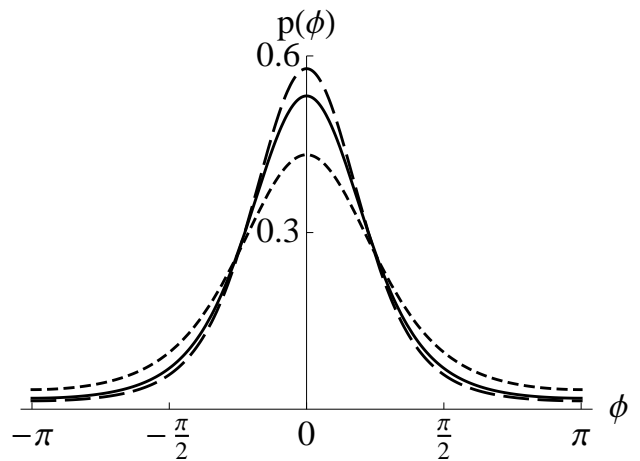}
\caption{The phase distributions of a coherent state $\vert z\rangle$ with $z=1$ in the case of ideal detectors (higher dashed line),  balancing by squeezing (solid line), and balancing by an additional beam splitter (lower dashed line).}\label{distributions}
\end{figure}
\section{Conclusions and outlooks}\label{conclusions}
In this paper we have analyzed in detail the performance of the 
eight-port homodyne detector as a suitable device to measure 
covariant phase observables. We have abandoned the traditional 
assumption of equal quantum efficiencies for the four photodetectors 
involved in the detection scheme and have investigated in detail 
the conditions under which the measurement of a CPO
may be achieved. We have found that balancing the efficiencies using an
additional beam splitter allows to achieve CPO at the price of reducing
the overall effective efficiency and we have proved that this CPO is never 
a smearing of the ideal CPO achievable with unit quantum efficiency.
We have also suggested an alternative compensation strategy, where a 
squeezed vacuum is used as a parameter field, which allows 
one to increase the overall efficiency in comparison to the passive case 
using only a moderate amount of squeezing. 
\par
In ideal conditions, i.e. for photodetectors with unit quantum efficiencies, 
the phase-space observables achievable by eight-port homodyning are
equivalent to those achievable by six-port homodyning \cite{Par97}
or heterodyning \cite{Yue78,Sha84,Sha85}. Equivalence also holds in noisy conditions 
if all the involved photodetectors are assumed to have the same quantum 
efficiency. In this context a question arises on whether the effects of
different quantum efficiencies may result in different phase-space
observables or in inequivalent compensation schemes.
Work along these lines is in progress and results will be reported
elsewhere.
\par
Our results provide a more realistic characterization of phase-space 
measurements of the optical phase by eight-port homodyning, and are suitable 
for experimental verification. As a matter of fact, both compensation 
schemes suggested in this paper may be implemented with current quantum optical 
technology. 
\section*{Acknowledgment} 
JS and JPP thank Pekka Lahti for useful discussions. JS is grateful to
 the Finnish Cultural Foundation for financial support.
MGAP thanks Sabrina Maniscalco for several discussions and 
the Finnish Cultural Foundation (Science Workshop on Entanglement) 
for financial support.

\end{document}